\documentclass[11pt]{article}

\usepackage{hyperref}
\usepackage{subcaption}
\usepackage[table]{xcolor}
\usepackage{todonotes}
\usepackage{tcolorbox}
\usepackage[margin=1in]{geometry}
\usepackage{amsmath}
\usepackage{enumerate}
\usepackage[shortlabels]{enumitem}
\setlist{noitemsep,topsep=0pt,parsep=0pt,partopsep=0pt}
\usepackage{tikz}
\usepackage{amsfonts}
\usepackage[page]{appendix}
\usepackage{amsthm}
\usepackage[normalem]{ulem}
\newtheorem{theorem}{Theorem}[section]
\usepackage{xfrac}

\newcommand{\one}{({\em i})}
\newcommand{\two}{({\em ii})}
\newcommand{\three}{({\em iii})}

\begin{document}

\title{\bf \Large A Generalised Solution to Distributed Consensus}

\author{
  Heidi Howard, Richard Mortier\\
  University of Cambridge\\
  first.last@cl.cam.ac.uk
}
\date{}

\maketitle

\begin{abstract}
\normalsize
Distributed consensus, the ability to reach agreement in the face of failures and asynchrony, is a fundamental primitive for constructing reliable distributed systems from unreliable components.
The Paxos algorithm is synonymous with distributed consensus, yet it performs poorly in practice and is famously difficult to understand.
In this paper, we re-examine the foundations of distributed consensus.
We derive an abstract solution to consensus, which utilises immutable state for intuitive reasoning about safety.
We prove that our abstract solution generalises over Paxos as well as the Fast Paxos and Flexible Paxos algorithms.
The surprising result of this analysis is a substantial weakening to the quorum requirements of these widely studied algorithms.
\end{abstract}

\section{Introduction}
\label{sec:intro}

We depend upon distributed systems, yet the computers and networks that make up these systems are asynchronous and unreliable.
The longstanding problem of distributed consensus formalises how to reliably reach agreement in such systems.
When solved, we become able to construct strongly consistent distributed systems from unreliable components~\cite{lamport_nets76,schneider_cs90,budhiraja_book93,lampson_wdag96}.
Lamport's Paxos algorithm~\cite{lamport_tcs98} is widely deployed in production to solve distributed consensus~\cite{burrows_osdi06,chandra_podc07}, and experience with it has led to extensive research to improve its performance and our understanding but, despite its popularity, both remain problematic.

Paxos performs poorly in practice because its use of majorities means that each decision requires a round trip to many participants, thus placing substantial load on each participant and the network connecting them.
As a result, systems are typically limited in practice to just three or five participants.
Furthermore, Paxos is usually implemented in the form of \emph{Multi-Paxos}, which establishes one participant as the \emph{master},  introducing a performance bottleneck and increasing latency as all decisions are forwarded via the master.
Given these limitations, many production systems often opt to sacrifice strong consistency guarantees in favour of performance and high availability~\cite{decandia_sosp07,bronson_atc13,lu_sosp15}.
Whilst compromise is inevitable in practical distributed systems~\cite{gilbert_sigact02}, Paxos offers just one point in the space of possible trade-offs.
In response, this paper aims to improve performance by offering a generalised solution allowing  engineers the flexibility to choose their own trade-offs according to the needs of their particular application and deployment environment.

Paxos is also notoriously difficult to understand, leading to much follow up work, explaining the algorithm in simpler terms~\cite{prisco_wdag97,lamport_sigact01,ongaro_atc14,van_cs15} and filling the gaps in the original description, necessary for constructing practical systems~\cite{chandra_podc07,bolosky_nsdi11}.
In recent years, immutability has been increasingly widely utilised in distributed systems to tame complexity~\cite{helland_acmq15}.
Examples such as append-only log stores~\cite{balakrishnan_tcs13,facebook_logdevice} and CRDTs~\cite{shapiro_report11} have inspired us to apply immutability to the problem of consensus.

This paper re-examines the problem of distributed consensus with the aim of improving performance and understanding.
We proceed as follows.
Once we have defined the problem of consensus~(\S\ref{sec:problem}), we propose a generalised solution to consensus that uses only immutable state to enable more intuitive reasoning about correctness~(\S\ref{sec:general}).
We subsequently prove that both Paxos and Fast Paxos~\cite{lamport_msr05} are instances of our generalised consensus algorithm and thus show that both algorithms are conservative in their approach, particularly in their rules for quorum intersection and quorum agreement~(\S\ref{sec:paxos} \&~\S\ref{sec:fast_paxos}).
Finally, we conclude by illustrating the power of our abstraction by outlining three different instances of our generalised consensus algorithm which provide alternative performance trade-offs compared to Paxos~(\S\ref{sec:examples}).

\section{Problem definition}
\label{sec:problem}

The classic formulation of consensus considers how to decide upon a single value in a distributed system.
This seemingly simple problem is made non-trivial by the weak assumptions made about the underlying system: we assume only that the algorithm is correctly executed (i.e.,~the non-Byzantine model).
We do not assume that participants are either reliable or synchronous. Participants may operate at arbitrary speeds and messages may be arbitrarily delayed.

We consider systems comprised of two types of participant: \emph{servers}, which store the value, and \emph{clients}, which read/write the value.
Clients take as input a value to be written and produce as output the value decided by the system.
Messages may only be exchanged between clients and servers and we assume that the set of participants, servers and clients, is fixed and known to the clients.

An algorithm solves consensus if it satisfies the following three requirements:
\begin{itemize}
  \item \textbf{Non-triviality}. All output values must have been the input value of a client.
  \item \textbf{Agreement}. All clients that output a value must output the same value.
  \item \textbf{Progress}. All clients must eventually output a value if the system is reliable and synchronous for a sufficient period.
\end{itemize}

The progress requirement rules out algorithms that could reach deadlock.
As termination cannot be guaranteed in an asynchronous system where failures may occur~\cite{fischer_jacm85}, algorithms need only guarantee termination assuming liveness.



If we have only one server, the solution is straightforward.
The server has a single persistent write-once register, $R0$, to store the decided value.
Clients send requests to the server with their input value.
If $R0$ is unwritten, the value received is written to $R0$ and is returned to the client.
If $R0$ is already written, then the value in $R0$ is read and returned to the client.
The client then outputs the returned value.
This algorithm achieves consensus but requires the server to be available for clients to terminate.
To overcome this limitation requires deployment of more than one server, so we now consider how to generalise to multiple servers.

\section{Generalised solution}
\label{sec:general}

Consider a set of servers, $\{S0,S1,\dots,Sn\}$, where each has a infinite series of write-once, persistent registers, $\{R0,R1,\dots\}$.
Clients read and write registers on servers and, at any time, each register is in one of the three states:
\begin{itemize}
  \item \textbf{unwritten}, the starting state for all registers; or
  \item \textbf{contains a value}, e.g.,~A, B, C; or
  \item \textbf{contains \emph{nil}}, a special value denoted as $\bot$.
\end{itemize}

\begin{figure}
  \centering
  \begin{footnotesize}
    \begin{subfigure}[b]{0.45\textwidth}
      \centering
      \begin{tabular}{ |l|l| }
        \hline
        \textbf{Register} & \textbf{Quorums} \\
        \hline
        $R0$ & $\{\{S0,S1,S2\}\}$ \\
        $R1, R2, \dots $ & $\{\{S0,S1\}, \{S0,S2\}, \{S1,S2\}\}$ \\
        \hline
      \end{tabular}
      \caption{}
      \label{fig:example_configs/3s}
    \end{subfigure}
    \begin{subfigure}[b]{0.45\textwidth}
    \centering
    \begin{tabular}{ |l|l|}
      \hline
      \textbf{Register} & \textbf{Quorums} \\
      \hline
      $R0, R2,\dots$ & $\{\{S0,S1\}\}$ \\
      $R1, R3,\dots$ & $\{\{S2,S3\}\}$\\
      \hline
    \end{tabular}
    \caption{}
    \label{fig:example_configs/4s_disjoint}
    \end{subfigure}
    \begin{subfigure}[b]{0.45\textwidth}
      \centering
      \begin{tabular}{ |l|l|}
        \hline
        \textbf{Register} & \textbf{Quorums} \\
        \hline
        $R0, R1, \dots$ & $\{\{S0,S1\}, \{S2,S3\}\}$ \\
        \hline
      \end{tabular}
      \caption{}
      \label{fig:example_configs/4s_basic}
    \end{subfigure}
    \begin{subfigure}[b]{0.45\textwidth}
    \centering
    \begin{tabular}{ |l|l|}
      \hline
      \textbf{Register} & \textbf{Quorums} \\
      \hline
      $R0, R1,\dots$ & $\{\{S0,S1\}, \{S0,S2\}, \{S1,S2\}\}$ \\
      \hline
    \end{tabular}
    \caption{}
    \label{fig:example_configs/3s_paxos}
    \end{subfigure}
  \caption{Sample configurations for systems of three or four servers.}
  \label{fig:example_configs}
  \end{footnotesize}
\end{figure}

\begin{figure}
  \centering
  \begin{footnotesize}
    \begin{subfigure}[b]{0.3\textwidth}
      \centering
        \begin{tabular}{ |c |c c c|  }
          \hline
             & \textbf{S0} & \textbf{S1} & \textbf{S2} \\
          \hline
          \textbf{R0} & A & $\bot$ & B\\
          \textbf{R1} & $\bot$ & $\bot$ & $\bot$\\
          \textbf{R2} & B & \cellcolor{gray!20}A & \cellcolor{gray!20}A \\
          \hline
        \end{tabular}
      \subcaption{A decided by R2}
      \label{fig:example_state/3s/a}
      \end{subfigure}
      \begin{subfigure}[b]{0.3\textwidth}
        \centering
        \begin{tabular}{ |c |c c c|  }
          \hline
               & \textbf{S0} & \textbf{S1} & \textbf{S2} \\
            \hline
            \textbf{R0} & \cellcolor{gray!20}A & \cellcolor{gray!20}A & \cellcolor{gray!20}A \\
            \textbf{R1} & \cellcolor{gray!20}A & \cellcolor{gray!20}A & \\
            \hline
          \end{tabular}
        \subcaption{A decided by R0 \& R1}
        \label{fig:example_state/3s/b}
        \end{subfigure}
        \begin{subfigure}[b]{0.3\textwidth}
          \centering
          \begin{tabular}{ |c |c c c|  }
            \hline
                 & \textbf{S0} & \textbf{S1} & \textbf{S2} \\
              \hline
              \textbf{R0} & A & $\bot$ & A \\
              \textbf{R1} & A & C & $\bot$\\
              \textbf{R2} &  & C & B\\
              \hline
            \end{tabular}
          \subcaption{No decisions yet}
          \label{fig:example_state/3s/c}
          \end{subfigure}
        %
  \caption{Sample state tables for a system using the configuration in Figure~\ref{fig:example_configs/3s}.}
  \label{fig:example_state}
    \end{footnotesize}
\end{figure}

A quorum, $Q$, is a (non-empty) subset of servers, such that if all servers have a same (non-nil) value $v$ in the same register then $v$ is said to be \emph{decided}.
A \emph{register set}, $i$, is the set comprised of the register $Ri$ from each server.
Each register set $i$ is configured with a set of quorums, $\mathcal{Q}_i$, and four example configurations are given in Figure~\ref{fig:example_configs}.
The state of all registers can be represented in a table, known as a \emph{state table}, where each column represents the state of one server and each row represents a register set.
By combining a configuration with a state table, we can determine whether any decision(s) have been reached, as shown in Figure~\ref{fig:example_state}.

\subsection{Correctness}
\label{subsec:general/correctness}

\begin{figure}
  \begin{tcolorbox}
  \textbf{Rule 1: Quorum agreement}. A client may only output a (non-nil) value $v$ if it has read $v$ from a quorum of servers in the same register set.

  \textbf{Rule 2: New value}. A client may only write a (non-nil) value $v$ provided that either $v$ is the client's input value or that the client has read $v$ from a register.

  \textbf{Rule 3: Current decision}. A client may only write a (non-nil) value $v$ to register $r$ on server $s$ provided that if $v$ is decided in register set $r$ by a quorum $Q \in \mathcal{Q}_r$ where $s\in Q$ then no value $v'$ where $v \neq v'$ can also be decided in register set $r$.

  \textbf{Rule 4: Previous decisions}. A client may only write a (non-nil) value $v$ to register $r$ provided no value $v'$ where $v \neq v'$ can be decided by the quorums in register sets $0$ to $r-1$.

  \end{tcolorbox}
  \caption{The four rules for correctness.}
  \label{fig:rules}
\end{figure}

Figure~\ref{fig:rules} describes a generalised solution to consensus by giving four rules governing how clients interact with registers to ensure that the non-triviality and agreement requirements for consensus~(\S\ref{sec:problem}) are satisfied.

Rule 1 (\emph{quorum agreement}) ensures that clients only output values that have been decided.
Rule 2 (\emph{new value}) ensures that only client input values can be written to registers thus only client input values can be decided and output by clients.
Rules 3 and 4 ensures that no two quorums can decide upon different values.
Rule 3 (\emph{current decision}) ensures that all decisions made by a register set will be for the same value whilst Rule 4 (\emph{previous decisions}) ensures that all decisions made by different register sets are for the same value.

\subsection{Implementing the correctness rules}
\label{sec:general/safety}

Rules 1 and 2 are easy to implement, but Rules 3 and 4 require more careful treatment.

\begin{figure}
  \centering
  \begin{footnotesize}
    \begin{tabular}{ |l|l| }
      \hline
      \textbf{Register} & \textbf{Client} \\
      \hline
      $R0, R3,\dots$ & C0 \\
      $R1, R4,\dots$ & C1 \\
      $R2, R5,\dots$ & C2 \\
      \hline
    \end{tabular}
  \caption{Sample round robin allocation of register sets to clients.}
  \label{fig:client_configs}
  \end{footnotesize}
\end{figure}

\paragraph{Rule 3 (\emph{current decision}).}

The simplest implementation of Rule 3 is to permit only configurations with one quorum per register set, as in Figure~\ref{fig:example_configs/4s_disjoint}.
We generalise this to \emph{intersecting quorums configurations} by permitting multiple quorums per register set, provided that all quorums for a given register set intersect, as in Figure~\ref{fig:example_configs/3s_paxos}.
The requirement for intersection ensures that if multiple quorums in a register set decide a value then they must decide the same value as they must share a common register.

Alternatively, we can support disjoint quorums if we require that all values written to a given register set are the same.
This can be achieved by assigning register sets to clients and requiring that clients write only to their own register sets, with at most one value.
In practice, this could be implemented by using an allocation such as that in Figure~\ref{fig:client_configs} and by requiring clients to keep a persistent record of which register sets they have written too.
We refer to these as \emph{client restricted configurations}.

Both techniques, intersecting quorums configurations and client restricted configurations, can be combined on a per-register-set basis.

\paragraph{Rule 4 (\emph{previous decisions}).}

Rule 4 requires clients to ensure that, before writing a (non-nil) value, previous register sets cannot decide a different value.
This is trivially satisfied for register set $0$, however, more work is required by clients to satisfy this rule for subsequent register sets.

Assume each client maintains their own local copy of the state table.
Initially, each client's state table is empty as they have not yet learned anything regarding the state of the servers.
A client can populate its state tables by reading registers and storing the results in its copy of the state table.
Since the registers are persistent and write-once, if a register contains a value (nil or otherwise) then any reads will always remain valid.
Each client's state tables will therefore always contain a subset of the values from the state table.

From its local state table, each client can track whether decisions have been reached or could be reached by previous quorums.
We refer to this as the \emph{decision table}. At any given time, each quorum is in one of four decision states:
\begin{itemize}
  \item \textsc{Any}: Any value could be decided by this quorum.
  \item \textsc{Maybe~$v$}: If this quorum reaches a decision, then value $v$ will be decided.
  \item \textsc{Decided~$v$}: The value $v$ has been decided by this quorum; a final state.
  \item \textsc{None}: This quorum will not decide a value; a final state.
\end{itemize}

The rules for updating the decision table are as follows:
Initially, the decision state of all quorums is \textsc{Any}.
If there is a quorum where all registers contain the same value $v$ then its decision state is \textsc{Decided}~$v$.
When a client reads \emph{nil} from register $r$ on server $s$ then for all quorums $Q \in \mathcal{Q}_r$ where $s \in Q$, the decision state \textsc{Any}/\textsc{Maybe}~$v$ becomes \textsc{None}.
When a client reads a non-\emph{nil} value $v$ from a client restricted register set $r$ then for all quorums over register sets $0$ to $r$, the decision state \textsc{Any} becomes \textsc{Maybe}~$v$ and \textsc{Maybe}~$v'$ where $v\neq v'$ becomes \textsc{None}.
When a client reads a non-\emph{nil} value $v$ from a quorum intersecting register set $r$ on server $s$ then for all quorums $Q \in \mathcal{Q}_r$ where $s \in Q$ and for all quorums over register sets $0$ to $r-1$, the state \textsc{Any} becomes \textsc{Maybe}~$v$ and \textsc{Maybe}~$v'$ where $v\neq v'$ becomes \textsc{None}.

These rules utilise the knowledge that if a client reads a (non-nil) value $v$ from the register $r$ on server $s$, it learns that:
\begin{itemize}
  \item If $r$ is client restricted then all quorums in $r$ must decide $v$ if they reach a decision (Rule 3).
  \item If any quorum of register sets $0$ to $r-1$ reaches a decision then value $v$ is decided (Rule 4).
\end{itemize}

\begin{figure}
  \begin{tcolorbox}
    A client may output value $v$ provided at least one quorum state is \textsc{Decided}~$v$ (Rule 1).

    A client $c$ may write a non-nil value $v$ to register set $r$ provided:
    \begin{enumerate}[i.]
      \item $v$ is $c$'s input value or has been read from a register (Rule 2), and
      \item $r$ is either:
      \begin{itemize}
        \item quorum intersecting, or
        \item client restricted and $r$ has been allocated to $c$ but not yet used (Rule 3), and
      \end{itemize}
      \item the decision state of each quorum from register sets $0$ to $r-1$ is \textsc{None}, \textsc{Maybe}~$v$ or \textsc{Decided}~$v$ (Rule 4).
    \end{enumerate}
  \end{tcolorbox}
  \caption{Client decision table rules}
  \label{fig:decisiontable}
\end{figure}

Figure~\ref{fig:decisiontable} describes how clients can use decision tables to implement the four rules for correctness. 

\subsection{Examples}
\label{subsec:general/examples}

\begin{figure}
  \centering
  \begin{footnotesize}
  \begin{subfigure}[b]{\textwidth}
    \centering
    \begin{tabular}{ |c |c c c c | }
      \hline
         & \textbf{S0} & \textbf{S1} & \textbf{S2} & \textbf{S3} \\
      \hline
      \textbf{R0} &  &  &  &  \\
      \hline
    \end{tabular}
    \hspace{1cm}
    \begin{tabular}{| l l l |}
      \hline
      \textbf{Register} & \textbf{Quorum} & \textbf{Decision state} \\
      \hline
      $R0$ & $\{S0,S1\}$ & Any\\
      \hline
    \end{tabular}
    \caption{Initial state.}
    \label{fig:oneq_decision/initial}
  \end{subfigure}

  \begin{subfigure}[b]{\textwidth}
    \centering
    \begin{tabular}{ |c |c c c c | }
      \hline
         & \textbf{S0} & \textbf{S1} & \textbf{S2} & \textbf{S3} \\
      \hline
      \textbf{R0} &  &  &  &  \\
      \textbf{R1} &  &  &  & B \\
      \hline
    \end{tabular}
    \hspace{1cm}
    \begin{tabular}{| l l l |}
      \hline
      \textbf{Register} & \textbf{Quorum} & \textbf{Decision state} \\
      \hline
      $R0$ & $\{S0,S1\}$ & \textsc{Maybe B}\\
      $R1$ & $\{S2,S3\}$ & \textsc{Maybe B}\\
      \hline
    \end{tabular}
    \caption{State after reading B from $R1$ on $S3$.}
    \label{fig:oneq_decision/readone}
  \end{subfigure}

  \begin{subfigure}[b]{\textwidth}
    \centering
    \begin{tabular}{ |c |c c c c | }
      \hline
         & \textbf{S0} & \textbf{S1} & \textbf{S2} & \textbf{S3} \\
      \hline
      \textbf{R0} & A &  &  &  \\
      \textbf{R1} &  &  &  & B \\
      \hline
    \end{tabular}
    \hspace{1cm}
    \begin{tabular}{| l l l |}
      \hline
      \textbf{Register} & \textbf{Quorum} & \textbf{Decision state} \\
      \hline
      $R0$ & $\{S0,S1\}$ & \textsc{None}\\
      $R1$ & $\{S2,S3\}$ & \textsc{Maybe B}\\
      \hline
    \end{tabular}
    \caption{State after reading A from $R0$ on $S0$.}
    \label{fig:oneq_decision/readtwo}
  \end{subfigure}

  \begin{subfigure}[b]{\textwidth}
    \centering
    \begin{tabular}{ |c |c c c c | }
      \hline
         & \textbf{S0} & \textbf{S1} & \textbf{S2} & \textbf{S3} \\
      \hline
      \textbf{R0} & A &  &  &  \\
      \textbf{R1} &  &  & B &  B\\
      \hline
    \end{tabular}
    \hspace{1cm}
    \begin{tabular}{| l l l |}
      \hline
      \textbf{Register} & \textbf{Quorum} & \textbf{Decision state} \\
      \hline
      $R0$ & $\{S0,S1\}$ & \textsc{None}\\
      $R1$ & $\{S2,S3\}$ & \textsc{Decided B}\\
      \hline
    \end{tabular}
    \caption{State after reading read B from $R1$ on $S2$.}
    \label{fig:oneq_decision/decided}
  \end{subfigure}
  \caption{Sample client state tables (left) and decision tables (right).}
  \label{fig:oneq_decision}
  \end{footnotesize}
\end{figure}

This process is illustrated by Figures~\ref{fig:oneq_decision} and ~\ref{fig:disjoint_decision}, which demonstrate how a client's state is updated as they read registers.
Figure~\ref{fig:oneq_decision} shows the state of a client $C0$ in a system of 4 servers using the intersecting quorum configuration from Figure~\ref{fig:example_configs/4s_disjoint}.
Figure~\ref{fig:oneq_decision/initial} shows the client's initial state.
The client's state table is empty thus the status of all quorums in the decision table is \textsc{Any}.
At this time, the client may only write non-nil values to $R0$ due to condition (iii) in Figure~\ref{fig:decisiontable}.
Next, Figure~\ref{fig:oneq_decision/readone}, the status of quorum $\{S2,S3\}$ over register set $1$ is updated to \textsc{Maybe}~B since, depending on the state of register $R1$ on $S2$, either this quorum will not reach a decision or it decides value B.
Since the client that wrote B into $R1$ on $S3$ must have followed Rule 4, the quorum in $R0$ must decide B if it reaches a decision thus its status is updated to \textsc{Maybe}~B.
The client $C0$ can now write value B to $R1$ or $R2$.
Subsequently in Figure~\ref{fig:oneq_decision/readtwo}, the client could now safely write its input value to $R1$ but there would be no use in doing so.
Finally in Figure~\ref{fig:oneq_decision/decided}, the client learns that B is decided and thus can output B.

\begin{figure}
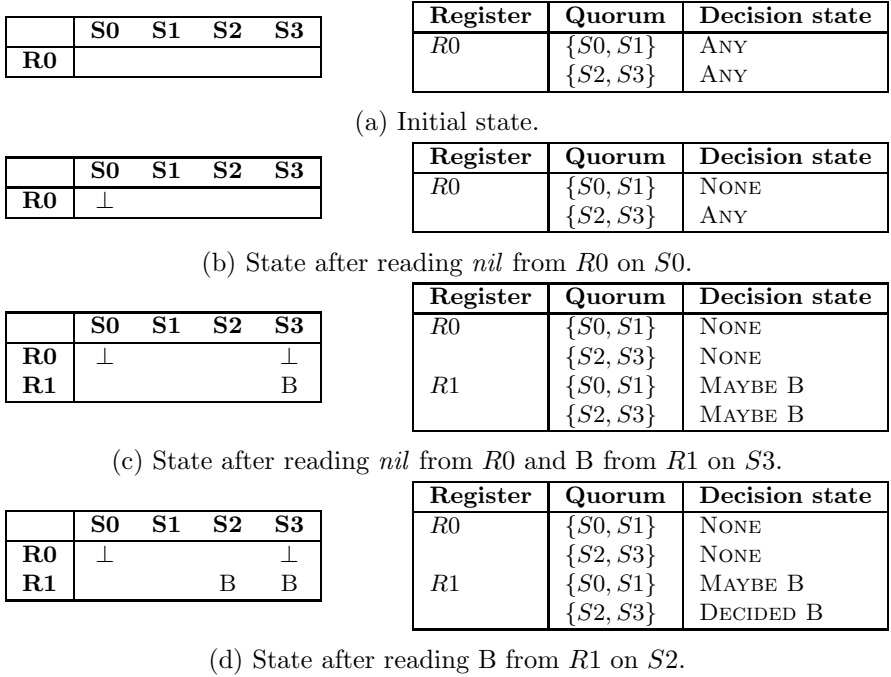

  \centering
  \begin{footnotesize}
  \begin{subfigure}[b]{\textwidth}
    \centering
    \begin{tabular}{ |c|cccc| }
      \hline
         & \textbf{S0} & \textbf{S1} & \textbf{S2} & \textbf{S3} \\
      \hline
      \multicolumn{1}{|c|}{\textbf{R0}} &  &  &  &  \\
      \hline
    \end{tabular}
    \hspace{1cm}
    \begin{tabular}{ |l |l | l |}
      \hline
      \textbf{Register} & \textbf{Quorum} & \textbf{Decision state} \\
      \hline
      $R0$ & $\{S0,S1\}$ & \textsc{Any} \\
      & $\{S2,S3\}$ & \textsc{Any} \\
      \hline
    \end{tabular}
    \caption{Initial state.}
    \label{fig:disjoint_decision/initial}
  \end{subfigure}

  \begin{subfigure}[b]{\textwidth}
    \centering
    \begin{tabular}{ |c|cccc| }
      \hline
         & \textbf{S0} & \textbf{S1} & \textbf{S2} & \textbf{S3} \\
      \hline
      \multicolumn{1}{|c|}{\textbf{R0}} & $\bot$ &  &  &  \\
      \hline
    \end{tabular}
    \hspace{1cm}
    \begin{tabular}{ |l |l | l |}
      \hline
      \textbf{Register} & \textbf{Quorum} & \textbf{Decision state} \\
      \hline
      $R0$ & $\{S0,S1\}$ & \textsc{None} \\
      & $\{S2,S3\}$ & \textsc{Any} \\
      \hline
    \end{tabular}
    \caption{State after reading \emph{nil} from $R0$ on $S0$.}
    \label{fig:disjoint_decision/one}
  \end{subfigure}

  \begin{subfigure}[b]{\textwidth}
    \centering
    \begin{tabular}{ |c|cccc| }
      \hline
         & \textbf{S0} & \textbf{S1} & \textbf{S2} & \textbf{S3} \\
      \hline
      \textbf{R0} & $\bot$ &  &  & $\bot$ \\
      \textbf{R1} &  &  &  & B \\
      \hline
    \end{tabular}
    \hspace{1cm}
    \begin{tabular}{ |l |l | l |}
      \hline
      \textbf{Register} & \textbf{Quorum} & \textbf{Decision state} \\
      \hline
      $R0$ & $\{S0,S1\}$ & \textsc{None} \\
      & $\{S2,S3\}$ & \textsc{None} \\
      $R1$ & $\{S0,S1\}$ & \textsc{Maybe B} \\
      & $\{S2,S3\}$ & \textsc{Maybe B} \\
      \hline
    \end{tabular}
    \caption{State after reading \emph{nil} from $R0$ and B from $R1$ on $S3$.}
    \label{fig:disjoint_decision/three}
  \end{subfigure}

  \begin{subfigure}[b]{\textwidth}
    \centering
    \begin{tabular}{ |c|cccc| }
      \hline
         & \textbf{S0} & \textbf{S1} & \textbf{S2} & \textbf{S3} \\
      \hline
      \textbf{R0} & $\bot$ &  &  & $\bot$ \\
      \textbf{R1} &  &  & B & B \\
      \hline
    \end{tabular}
    \hspace{1cm}
    \begin{tabular}{ |l |l | l |}
      \hline
      \textbf{Register} & \textbf{Quorum} & \textbf{Decision state} \\
      \hline
      $R0$ & $\{S0,S1\}$ & \textsc{None} \\
      & $\{S2,S3\}$ & \textsc{None} \\
      $R1$ & $\{S0,S1\}$ & \textsc{Maybe B} \\
      & $\{S2,S3\}$ & \textsc{Decided B} \\
      \hline
    \end{tabular}
    \caption{State after reading B from $R1$ on $S2$.}
    \label{fig:disjoint_decision/four}
  \end{subfigure}
  \caption{Sample client state tables (left) and decision tables (right).}
  \label{fig:disjoint_decision}
  \end{footnotesize}
\end{figure}

Figure~\ref{fig:disjoint_decision} shows the state of a client $C0$ in a system using the configuration in Figure~\ref{fig:example_configs/4s_basic} and the client allocation from Figure~\ref{fig:client_configs}.
Figure~\ref{fig:disjoint_decision/initial} shows the initial state of the client $C0$.
At this time, the client $C0$ can only write non-nil values to $R0$.
Later in Figure~\ref{fig:disjoint_decision/three}, the client has updated the status of both quorums in $R1$ to \textsc{Maybe}~B after reading B from $R1$. This is because register set $1$ is client restricted to value B.

\section{Generalisation of Paxos}
\label{sec:paxos}

\begin{figure}
\begin{tcolorbox}
\textbf{Phase 1}
\begin{itemize}[leftmargin=*]
  \item A client $c$ chooses a register set $r$ that it has been assigned but not yet used and sends \textsc{P1a}($r$) to all servers.
  \item Upon receiving \textsc{p1a}($r$), each server checks if register $r$ is unwritten.
  If so, any unwritten registers up to $r-1$ (inclusive) are set to \emph{nil}.
  The server replies with \textsc{p1b}($r$, $S$) where $S$ is a set of all written non-nil registers.
  \item If $c$ receives \textsc{p1b} messages from a majority of servers then $c$ chooses the value from the greatest (non-nil) register.
  If no values were returned with \textsc{P1b} messages then $c$ chooses its input value.
  $c$ then proceeds to phase two.
  Otherwise, $c$ times out and restarts phase one.
\end{itemize}

\textbf{Phase 2}
\begin{itemize}[leftmargin=*]
  \item $c$ sends \textsc{P2a}($r$, $v$) to all servers where $v$ is value chosen at the end of phase one.
  \item Upon receiving \textsc{P2a}($r$, $v$), each server checks if register $r$ is unwritten.
  If so, any unwritten registers up to $r-1$ (inclusive) are set to to \emph{nil} and register $r$ is set to $v$.
  The server replies with \textsc{P2b}($r$, $v$).
  \item If $c$ receives \textsc{P2b} messages from the majority of servers then $c$ learns that the value $v$ has been decided and can output $v$.
  Otherwise, $c$ times out and restarts phase one.
\end{itemize}
\end{tcolorbox}
\caption{The Paxos algorithm using write-once registers.}
\label{fig:paxos_alg}
\end{figure}

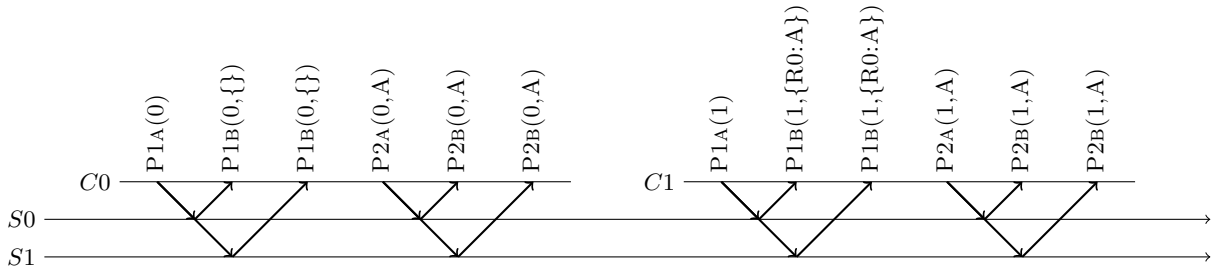
\begin{figure}
  \centering
  \begin{footnotesize}
  \begin{tikzpicture}
    \draw [-] (0.5,1) node[left] {$C0$} -- (6.5,1);
    \draw [-] (8,1) node[left] {$C1$} -- (14,1);
    \draw [->] (-0.5,0.5) node[left] {$S0$} -- (15,0.5);
    \draw [->] (-0.5,0) node[left] {$S1$} -- (15,0);

    \draw [->,thick] (1,1) node[rotate=90,above,anchor=west] {\textsc{P1a(0)}} -- (1.5,0.5);
    \draw [<-,thick] (2,1) node[rotate=90,above,anchor=west] {\textsc{P1b(0,\{\})}} -- (1.5,0.5);
    \draw [->,thick] (1,1) -- (2,0);
    \draw [<-,thick] (3,1) node[rotate=90,above,anchor=west] {\textsc{P1b(0,\{\})}} -- (2,0);

    \draw [->,thick] (4,1) node[rotate=90,above,anchor=west] {\textsc{P2a(0,A)}} -- (4.5,0.5);
    \draw [<-,thick] (5,1) node[rotate=90,above,anchor=west] {\textsc{P2b(0,A)}} -- (4.5,0.5);
    \draw [->,thick] (4,1) -- (5,0);
    \draw [<-,thick] (6,1) node[rotate=90,above,anchor=west] {\textsc{P2b(0,A)}} -- (5,0);

    \draw [->,thick] (1+7.5,1) node[rotate=90,above,anchor=west] {\textsc{P1a(1)}} -- (1.5+7.5,0.5);
    \draw [<-,thick] (2+7.5,1) node[rotate=90,above,anchor=west] {\textsc{P1b(1,\{R0:A\})}} -- (1.5+7.5,0.5);
    \draw [->,thick] (1+7.5,1) -- (2+7.5,0);
    \draw [<-,thick] (3+7.5,1) node[rotate=90,above,anchor=west] {\textsc{P1b(1,\{R0:A\})}} -- (2+7.5,0);

    \draw [->,thick] (4+7.5,1) node[rotate=90,above,anchor=west] {\textsc{P2a(1,A)}} -- (4.5+7.5,0.5);
    \draw [<-,thick] (5+7.5,1) node[rotate=90,above,anchor=west] {\textsc{P2b(1,A)}} -- (4.5+7.5,0.5);
    \draw [->,thick] (4+7.5,1) -- (5+7.5,0);
    \draw [<-,thick] (6+7.5,1) node[rotate=90,above,anchor=west] {\textsc{P2b(1,A)}} -- (5+7.5,0);
  \end{tikzpicture}
\end{footnotesize}
  \caption{Sample message exchange for Paxos}
  \label{fig:paxos_msd}
\end{figure}

The (unoptimised) Paxos algorithm is described in Figure~\ref{fig:paxos_alg} using only write-once registers.
Figure~\ref{fig:paxos_msd} gives an example of the message exchange as two clients execute Paxos with three servers.

\begin{figure}
  \centering
  \begin{footnotesize}
  \begin{subfigure}[b]{\textwidth}
    \centering
    \begin{tabular}{|c|ccc|}
      \hline
         & \textbf{S0} & \textbf{S1} & \textbf{S2} \\
      \hline
      \multicolumn{1}{|c|}{\textbf{R0}} &  &  &   \\
      \hline
    \end{tabular}
    \hspace{1cm}
    \begin{tabular}{|lll|}
      \hline
      \textbf{Register} & \textbf{Quorum} & \textbf{Decision state} \\
      \hline
      $R0$ & $\{S0,S1\}$ & \textsc{Any} \\
      & $\{S0,S2\}$ & \textsc{Any} \\
      & $\{S1,S2\}$ & \textsc{Any} \\
      \hline
    \end{tabular}
    \caption{Initial state, unchanged after receiving \textsc{P1b}(0,$\{\}$) from $S1$.}
    \label{fig:paxos_decision/initial}
  \end{subfigure}

  \begin{subfigure}[b]{\textwidth}
    \centering
    \begin{tabular}{|c|ccc|}
      \hline
         & \textbf{S0} & \textbf{S1} & \textbf{S2} \\
      \hline
      \multicolumn{1}{|c|}{\textbf{R0}} & A & A &  \\
      \hline
    \end{tabular}
    \hspace{1cm}
    \begin{tabular}{|lll|}
      \hline
      \textbf{Register} & \textbf{Quorum} & \textbf{Decision state} \\
      \hline
      $R0$ & $\{S0,S1\}$ & \textsc{Decided A} \\
      & $\{S0,S2\}$ & \textsc{Maybe A} \\
      & $\{S1,S2\}$ & \textsc{Maybe A} \\
      \hline
    \end{tabular}
    \caption{State after receiving \textsc{P2b}(0,A) from $S1$.}
    \label{fig:paxos_decision/end}
  \end{subfigure}
  \caption{Sample client state tables (left) and decision tables (right) for client $C0$ during the execution in Figure~\ref{fig:paxos_msd}.}
  \label{fig:paxos_decision}
  \end{footnotesize}
\end{figure}

\begin{figure}
  \centering
  \begin{footnotesize}

  \begin{subfigure}[b]{\textwidth}
    \centering
    \begin{tabular}{| c |c c c | }
      \hline
         & \textbf{S0} & \textbf{S1} & \textbf{S2} \\
      \hline
      \multicolumn{1}{|c|}{\textbf{R0}} & A & &  \\
      \hline
    \end{tabular}
    \hspace{1cm}
    \begin{tabular}{ |l l l|}
      \hline
      \textbf{Register} & \textbf{Quorum} & \textbf{Decision state} \\
      \hline
      $R0$ & $\{S0,S1\}$ & \textsc{Maybe A} \\
      & $\{S0,S2\}$ & \textsc{Maybe A} \\
      & $\{S1,S2\}$ & \textsc{Maybe A} \\
      \hline
    \end{tabular}
    \caption{State after receiving \textsc{P1b(1,\{R0:A\})} from $S0$.}
    \label{fig:paxostwo_decision/midone}
  \end{subfigure}

  \begin{subfigure}[b]{\textwidth}
    \centering
    \begin{tabular}{ |c |c c c | }
      \hline
         & \textbf{S0} & \textbf{S1} & \textbf{S2} \\
      \hline
      \multicolumn{1}{|c|}{\textbf{R0}} & A & A &  \\
      \hline
    \end{tabular}
    \hspace{1cm}
    \begin{tabular}{ |l l  l |}
      \hline
      \textbf{Register} & \textbf{Quorum} & \textbf{Decision state} \\
      \hline
      $R0$ & $\{S0,S1\}$ & \textsc{Decided A} \\
      & $\{S0,S2\}$ & \textsc{Maybe A} \\
      & $\{S1,S2\}$ & \textsc{Maybe A} \\
      \hline
    \end{tabular}
    \caption{State after receiving \textsc{P1b(1,\{R0:A\})} from $S1$.}
    \label{fig:paxostwo_decision/endone}
  \end{subfigure}

  \caption{Sample client state tables (left) and decision tables (right) for client $C1$ during the execution in Figure~\ref{fig:paxos_msd}.}
  \label{fig:paxostwo_decision}
    \end{footnotesize}
\end{figure}

We observe that Paxos is a conservative instance of our generalised solution to consensus.
The configuration used by Paxos is majorities for all register sets, such a configuration is given in Figure~\ref{fig:example_configs/3s_paxos}.
Paxos also uses client restricted for all register sets and a suitable client assignment is given in Figure~\ref{fig:client_configs}.
The purpose of phase one is to implement Rule 4 and the purpose of phase two is to implement Rule 1.
Earlier (\S\ref{sec:general}), we proposed client state and decision tables as a mechanism for clients to implement the rules for correctness.
Upon receiving \textsc{P1b}($r$,$\mathcal{R}$) where $\mathcal{R}$ is the set of registers from a server, the client learns the contents of registers $0$ to $r-1$.
This is because registers are always written to in-order on each server and register $r$ must be unwritten.
Therefore the client's state table and thus its decision table can be updated accordingly.
This is demonstrated in Figure~\ref{fig:paxos_decision} for client $C0$ and Figure~\ref{fig:paxostwo_decision} for client $C1$.

\subsection{Weakened quorum intersection requirements}
\label{subsec:paxos/quorums}

The boolean function \emph{I} tests whether two or more quorum sets are intersecting and is defined as $I(\{\mathcal{Q}^i\}) \equiv \forall i, \forall Q^i \in \mathcal{Q}^i: \bigcap Q^i \neq \emptyset$.

Paxos utilises majorities as it requires all quorums, $Q \in \mathcal{Q}$, to intersect, regardless of the register set or phase of the algorithm.
That is, in terms of \emph{I}, $I(\mathcal{Q},\mathcal{Q})$.

Instead, we differentiate between the quorums used for each register set and which phase of Paxos the quorum is used for.
$\mathcal{Q}_r^k$ is the set of quorums for phase $k$ of register set $r$.
We observe that quorum intersection is required only between the phase one quorum for register set $r$ and the phase two quorums of register sets 0 to $r-1$.
This is the case because a client can always proceed to phase two after intersecting with all previous phase two quorums since the condition (iii) in Figure~\ref{fig:decisiontable} will be satisfied.
More formally, \par
\hfill $\forall r \in \mathbb{N}_{0}, \forall r' \in \mathbb{N}_{<r}: I(\mathcal{Q}_r^1,\mathcal{Q}_{r'}^2)$. \hfill (*)

This result confirms the findings of Flexible Paxos~\cite{howard_opodis16}.
This is illustrated in Figure~\ref{fig:paxos_decision/initial} where the client was safe to proceed to phase two from startup since there is no intersection requirement for register set $0$.

\subsection{Progress without quorums}
\label{subsubsec:paxos/skip}

Each of the two phases of Paxos waits for agreement from a quorum of servers.
However, we observe that it may be possible to proceed prior to reaching quorum agreement.

A client can safely terminate once it learns that a value has been decided (Rule 1).
This need not be the result of completing both phases of the algorithm.
This is illustrated in Figure~\ref{fig:paxostwo_decision/endone} where the client learns that value A has been decided prior to starting phase two.

Similarly, if a server learns that a register $r$ contains a (non-nil) value $v$ then it also learns that if any quorums from register sets $0$ to $r$ reach a decision then $v$ must be chosen.
By updating their decision table, we observe that it is no longer necessary for the client in phase one to intersect with the phase two quorums of registers up to $r$ (inclusive).
This is illustrated in Figure~\ref{fig:paxostwo_decision/midone} where the client could safely proceed to phase two after one \textsc{P1b} message as the client reads a non-nil value from predecessor register set.

\section{Generalisation of Fast Paxos}
\label{sec:fast_paxos}

\begin{figure}
  \centering
  \begin{footnotesize}
    \begin{subfigure}[b]{0.45\textwidth}
      \centering
      \begin{tabular}{ |l|l|}
        \hline
        \textbf{Register} & \textbf{Quorums} \\
        \hline
        $R0, R1, \dots$ & $\{\{S0,S1,S2\}, \{S0,S1,S3\},$ \\
        & $\{S0,S2,S3\}, \{S1,S2,S3\} \}$ \\
         \hline
      \end{tabular}
      \caption{}
      \label{fig:example_configs/4s_intersecting}
    \end{subfigure}
    \begin{subfigure}[b]{0.45\textwidth}
      \centering
      \begin{tabular}{ |l|l| }
        \hline
        \textbf{Register} & \textbf{Quorums} \\
        \hline
        $R0, R1, R2$ & $\{\{S0,S1,S2\}\}$ \\
        $R3, R4, \dots $ & $\{\{S0,S1\}, \{S0,S2\}, \{S1,S2\}\}$ \\
        \hline
      \end{tabular}
      \caption{}
      \label{fig:add_example_configs/allin}
    \end{subfigure}
    \begin{subfigure}[b]{0.45\textwidth}
    \centering
    \begin{tabular}{ |l|l|}
      \hline
      \textbf{Register} & \textbf{Quorums} \\
      \hline
      $R0$ & $\{\{S0,S1\}\}$ \\
      $R1, R2, \dots $ & $\{\{S0,S1\}, \{S0,S2\}, \{S1,S2\}\}$ \\
      \hline
    \end{tabular}
    \caption{}
    \label{fig:add_example_configs/fixedmaj}
    \end{subfigure}
    \begin{subfigure}[b]{0.45\textwidth}
      \centering
      \begin{tabular}{ |l|l|}
        \hline
        \textbf{Register} & \textbf{Quorums} \\
        \hline
        $R0, R1, \dots, R10$ & $\{\{S0,S1\}, \{S0,S2\}, \{S1,S2\}\}$  \\
        $R11, \dots$ & $\{\{S3,S4\}, \{S3,S5\}, \{S4,S5\}\}$  \\
        \hline
      \end{tabular}
      \caption{}
      \label{fig:add_example_configs/reconfig}
    \end{subfigure}
  \caption{Additional sample configurations.}
  \label{fig:add_example_configs}
  \end{footnotesize}
\end{figure}

Paxos requires client restricted configuration for all register sets.
Fast Paxos~\cite{lamport_msr05} generalises Paxos by permitting intersecting quorum configurations for some register sets, known as \emph{fast} sets, whilst still utilising client restricted configurations for remaining sets, known as \emph{classic} sets.
Quorums for \emph{classic} sets must include $>\sfrac{1}{2}$ of servers whereas quorums for \emph{fast} sets must include $\geq\sfrac{3}{4}$ of servers.
Figure~\ref{fig:example_configs/4s_intersecting} is an example of such a configuration.

Fast Paxos modifies our original Paxos algorithm (Figure~\ref{fig:paxos_alg}) as follows:
\begin{itemize}
  \item If a register set is \emph{fast} then a client does not need to be assigned the register set nor do they need to ensure that they write to it with only one value.
  Any client can use the any \emph{fast} register set.
  \item If the register set is \emph{fast} then completion of each phase requires responses from $\sfrac{3}{4}$ of servers instead of $\sfrac{1}{2}$ of servers.
  \item  When choosing a value at the end of phase one, multiple values may have been read from the same register set (if it was a \emph{fast} set), in which case the client chooses the most common.
\end{itemize}

\subsection{Weakened quorum intersection requirements}
\label{subsec:fast_paxos/quorums}

Fast Paxos uses quorums of $\sfrac{3}{4}$ of servers for \emph{fast} sets and $\sfrac{1}{2}$ of servers for \emph{classic} sets since it requires the following intersection between quorums for \emph{fast} sets, $\mathcal{Q}_f$ and quorums for \emph{classic} sets, $\mathcal{Q}_c$: $I(\mathcal{Q}_c,\mathcal{Q}_c)$, and $I(\mathcal{Q}_c,\mathcal{Q}_f,\mathcal{Q}_f)$.\footnote{Generalisation to quorums requires us to rewrite the value selection rule to chose the value which may be decided. }

As with Paxos, these intersection requirements are conservative.
We differentiate between the quorums used for each register set and which phase of the algorithm the quorum is used for.
$\mathcal{Q}_r^k$ is the set of quorums for phase $k$ of register set $r$.
In addition to Paxos's weakened intersection requirement (Eq.~(*)), we observe that two additional quorum intersections are required: between the quorums for each \emph{fast} register set, and between the phase one quorum for register set $r$ and any pair of phase two quorums of \emph{fast} register sets from 0 to $r-1$.
Denoting the set of \emph{fast} register sets as $\mathbb{F}$, we express these requirements as follows: \par
\hfill $\forall r \in \mathbb{F}: I(\mathcal{Q}_r^2,\mathcal{Q}_r^2)$ and $ \forall r \in \mathbb{N}_{0}, \forall r' \in \mathbb{F}_{<r}: I(\mathcal{Q}_r^1,\mathcal{Q}_{r'}^2,\mathcal{Q}_{r'}^2)$. \hfill (**)

\subsection{Progress without quorums}
\label{subsec:fast_paxos/skip}

Utilising decision tables, we observe that quorum agreement is sufficient but not necessary for a client to complete a phase of the algorithm.
In particular, during the following three cases.

\one~As with Paxos, once a client learns that a quorum of registers contain a value then the client can terminate and return that value.

\two~If a client learns that a register $r$ contains a (non-nil) value $v$ then it also learns that if any quorums from register sets 0 to $r-1$ reach a decision then $v$ must be chosen.
If $r$ is a \emph{classic} register set then it also learns that if any quorums from register sets $r$ reach a decision then $v$ must be chosen.
The client therefore no longer needs to intersect with quorums $0$ to $r-1$ if $r$ is \emph{fast} or quorums $0$ to $r$ if $r$ is \emph{classic}.

\begin{figure}
  \centering
  \begin{footnotesize}

    \begin{tabular}{|c |c c c c|}
      \hline
         & \textbf{S0} & \textbf{S1} & \textbf{S2} & \textbf{S3} \\
         \hline
      \textbf{R0} & $\bot$ & $\bot$ &  & \\
      \hline
    \end{tabular}
    \hspace{1cm}
    \begin{tabular}{ |l l  l |}
      \hline
      \textbf{Register} & \textbf{Quorum} & \textbf{Decision state} \\
      \hline
      $R0$ & $\{S0,S1,S2\}$ & \textsc{None} \\
      & $\{S0,S1,S3\}$ & \textsc{None} \\
      & $\{S0,S2,S3\}$ & \textsc{None} \\
      & $\{S1,S2,S3\}$ & \textsc{None} \\
      \hline
    \end{tabular}

  \caption{Sample client state table (left) and decision table (right) for client $C0$ during Fast Paxos.}
  \label{fig:fpaxos_decision}
  \end{footnotesize}
\end{figure}

\begin{figure}
  \centering
  \begin{footnotesize}

    \begin{tabular}{| c | c c c  c|}
      \hline
         & \textbf{S0} & \textbf{S1} & \textbf{S2} & \textbf{S3} \\
      \hline
      \textbf{R0} & A & B &  & \\
      \hline
    \end{tabular}
    \hspace{1cm}
    \begin{tabular}{ |l l  l |}
      \hline
      \textbf{Register} & \textbf{Quorum} & \textbf{Decision state} \\
      \hline
      $R0$ & $\{S0,S1,S2\}$ & \textsc{None} \\
      & $\{S0,S1,S3\}$ & \textsc{None} \\
      & $\{S0,S2,S3\}$ & \textsc{Maybe} A \\
      & $\{S1,S2,S3\}$ & \textsc{Maybe} B \\
      \hline
    \end{tabular}

  \caption{Sample client state table (left) and decision table (right) for client $C0$ during Fast Paxos.}
  \label{fig:fpaxos_decisiontwo}
  \end{footnotesize}
\end{figure}

\three~Furthermore, a client in phase one will only need to intersect with any previous two \emph{fast} quorums if it is unable to determine which value to propose.
Figures~\ref{fig:fpaxos_decision} \&~\ref{fig:fpaxos_decisiontwo} give an example of this with the configuration from Figure~\ref{fig:example_configs/4s_intersecting}.
According to Equation~(**), the client $C0$ needs to read three registers from register set 0 before it can safely write to register set 1.
However, in Figure~\ref{fig:fpaxos_decision}, the client can safely write to register set $1$ after reading just two registers.
This is not the case in Figure~\ref{fig:fpaxos_decisiontwo} however.

\begin{figure}
\begin{tcolorbox}
\textbf{Phase 1}
\begin{itemize}[leftmargin=*]
  \item A client $c$ chooses a register set $r$ that is either: quorum intersecting or is client restricted and has been assigned to $c$ but not yet used.
  $c$ sends \textsc{P1a}($r$) to all servers.
  \item Upon receiving \textsc{p1a}($r$), each server checks if register $r$ is unwritten.
  If so, any unwritten registers up to $r-1$ (inclusive) are set to to \emph{nil}.
  The server replies with \textsc{p1b}($r$,$S$) where $S$ is a set of all written registers.
  \item Each time $c$ receives a \textsc{P1a}, it updates its state and decision tables accordingly.
  \uline{If the decision state of all quorums from register sets $0$ to $r-1$ is \textsc{None} or \textsc{Maybe}~$v$} then $c$ chooses $v$ (or if all states are \textsc{None} then its input value) and proceeds to phase two.
  If $c$ times out before completing phase one, it restarts phase one.
\end{itemize}

\textbf{Phase 2}
\begin{itemize}[leftmargin=*]
  \item $c$ sends \textsc{P2a}($r$,$v$) to all servers where $v$ is value chosen at the end of phase one.
  \item Upon receiving \textsc{P2a}($r$,$v$), each server checks if register $r$ is unwritten.
  If so, any unwritten registers up to $r-1$ (inclusive) are set to \emph{nil} and register $r$ is set to $v$.
  The server replies with \textsc{P2b}($r$,$v$).
  \item Each time $c$ receives a \textsc{P2a}, it updates its state and decision tables accordingly.
  \uline{If the decision state of a quorum is \textsc{Decided}~$v$} then $c$ outputs $v$.
  If $c$ times out before completing phase two, it restarts phase one.
\end{itemize}
\end{tcolorbox}
\caption{The Generalised Fast Paxos algorithm.}
\label{fig:gpaxos_alg}
\end{figure}

Figure~\ref{fig:gpaxos_alg} summaries how these generalisation can be combined into a revised Fast Paxos algorithm.
Note that a client can complete a phase once the completion criteria (underlined) has been satisfied even if it has not executed every step.

\section{Example consensus algorithms}
\label{sec:examples}

In this section, we outline three uses of our generalisation of Paxos and Fast Paxos by utilising different configurations.

\textbf{Co-located consensus.}
Consider a configuration which uses a quorum containing all servers for the first $k$ register sets and majority quorums afterwards, as shown in Figure~\ref{fig:add_example_configs/allin}.
All registers sets are client restricted.
Participants in a system may be deciding a value between themselves, and so a server and client are co-located on each participant.
A client can therefore either achieve consensus in one round trip to all servers (if all are available)  or two round trips to any majority (in case a server has failed).

\textbf{Fixed-majority consensus.}
Consider a configuration with one majority quorum for register set $0$ and majority quorums for register sets $1$ onwards, as shown in Figure~\ref{fig:add_example_configs/fixedmaj}.
Register set $0$ is quorum intersecting and register sets $1$ onwards are client restricted.
A client can either achieve consensus in one round trip to a specific majority or two round trips to any majority.

\textbf{Reconfigurable consensus.}
Consider a set of servers partitioned into a primary set and backup set.
Consider a configuration which uses only primary servers for register set $0$ to $k-1$ and only backup servers from register set $k$, as shown in Figure~\ref{fig:add_example_configs/reconfig}.
A client can move the systems from primary servers to backup servers by executing Paxos for register set $k$ or greater.
No subsequent client will need a reply from a primary server to make progress whilst the backup set is available.

\section{Conclusion}
\label{sec:conc}

Paxos has long been the \emph{de facto} approach to reaching consensus, however, this ``one size fits all" solution performs poorly in practice and is famously difficult to understand.
In this paper, we have reframed the problem of distributed consensus in terms of write-once registers and thus proposed a generalised solution to distributed consensus.
We have demonstrated that this solution not only unifies existing algorithms including Paxos and Fast Paxos but also demonstrates that such algorithms are conservative as their quorum intersection requirements and quorum agreement rules can be substantially weakened.
We have illustrated the power of our generalised consensus algorithm by proposing three novel algorithms for consensus, demonstrating a few interesting points on the diverse array of algorithms made possible by our abstract.

Our aim is to make reasoning about correctness sufficiently intuitive that proofs are not necessary to make a convincing case for the safety; nonetheless, we include in Appendix~\ref{sec:correctness_proofs} for completeness.


\section{Acknowledgements}
We would like to thank Jon Crowcroft, Stephen Dolan and Martin Kleppmann for their valuable feedback on this paper.
This work was funded in part by EPSRC EP/N028260/2 and EP/M02315X/1.

\newpage
\bibliographystyle{abbrv}
\bibliography{refs}

\begin{thebibliography}{10}

\bibitem{balakrishnan_tcs13}
M.~Balakrishnan, D.~Malkhi, J.~D. Davis, V.~Prabhakaran, M.~Wei, and T.~Wobber.
\newblock Corfu: A distributed shared log.
\newblock {\em ACM Trans. Comput. Syst.}, 31(4):10:1--10:24, Dec. 2013.

\bibitem{bolosky_nsdi11}
W.~J. Bolosky, D.~Bradshaw, R.~B. Haagens, N.~P. Kusters, and P.~Li.
\newblock Paxos replicated state machines as the basis of a high-performance
  data store.
\newblock In {\em Proceedings of the 8th USENIX Conference on Networked Systems
  Design and Implementation}, NSDI'11, pages 141--154, Berkeley, CA, USA, 2011.
  USENIX Association.

\bibitem{bronson_atc13}
N.~Bronson, Z.~Amsden, G.~Cabrera, P.~Chakka, P.~Dimov, H.~Ding, J.~Ferris,
  A.~Giardullo, S.~Kulkarni, H.~Li, M.~Marchukov, D.~Petrov, L.~Puzar, Y.~J.
  Song, and V.~Venkataramani.
\newblock Tao: Facebook's distributed data store for the social graph.
\newblock In {\em Proceedings of the 2013 USENIX Conference on Annual Technical
  Conference}, USENIX ATC'13, pages 49--60, Berkeley, CA, USA, 2013. USENIX
  Association.

\bibitem{budhiraja_book93}
N.~Budhiraja, K.~Marzullo, F.~B. Schneider, and S.~Toueg.
\newblock Distributed systems (2nd ed.).
\newblock chapter The Primary-backup Approach, pages 199--216. ACM
  Press/Addison-Wesley Publishing Co., New York, NY, USA, 1993.

\bibitem{burrows_osdi06}
M.~Burrows.
\newblock The chubby lock service for loosely-coupled distributed systems.
\newblock In {\em Proceedings of the 7th Symposium on Operating Systems Design
  and Implementation}, OSDI '06, pages 335--350, Berkeley, CA, USA, 2006.
  USENIX Association.

\bibitem{chandra_podc07}
T.~D. Chandra, R.~Griesemer, and J.~Redstone.
\newblock Paxos made live: An engineering perspective.
\newblock In {\em Proceedings of the Twenty-sixth Annual ACM Symposium on
  Principles of Distributed Computing}, PODC '07, pages 398--407, New York, NY,
  USA, 2007. ACM.

\bibitem{decandia_sosp07}
G.~DeCandia, D.~Hastorun, M.~Jampani, G.~Kakulapati, A.~Lakshman, A.~Pilchin,
  S.~Sivasubramanian, P.~Vosshall, and W.~Vogels.
\newblock Dynamo: Amazon's highly available key-value store.
\newblock In {\em Proceedings of Twenty-first ACM SIGOPS Symposium on Operating
  Systems Principles}, SOSP '07, pages 205--220, New York, NY, USA, 2007. ACM.

\bibitem{facebook_logdevice}
Facebook.
\newblock {L}og{D}evice project homepage.
\newblock \url{https://logdevice.io/}.
\newblock [Online; accessed 5-Oct-2018].

\bibitem{fischer_jacm85}
M.~J. Fischer, N.~A. Lynch, and M.~S. Paterson.
\newblock Impossibility of distributed consensus with one faulty process.
\newblock {\em Journal of the ACM (JACM)}, 32(2):374--382, 1985.

\bibitem{gilbert_sigact02}
S.~Gilbert and N.~Lynch.
\newblock Brewer's conjecture and the feasibility of consistent, available,
  partition-tolerant web services.
\newblock {\em SIGACT News}, 33(2):51--59, June 2002.

\bibitem{helland_acmq15}
P.~Helland.
\newblock Immutability changes everything.
\newblock {\em Queue}, 13(9):40:101--40:125, Nov. 2015.

\bibitem{howard_opodis16}
H.~Howard, D.~Malkhi, and A.~Spiegelman.
\newblock {Flexible Paxos: Quorum Intersection Revisited}.
\newblock In P.~Fatourou, E.~Jim{\'e}nez, and F.~Pedone, editors, {\em 20th
  International Conference on Principles of Distributed Systems (OPODIS 2016)},
  volume~70 of {\em Leibniz International Proceedings in Informatics (LIPIcs)},
  pages 25:1--25:14, Dagstuhl, Germany, 2017. Schloss Dagstuhl--Leibniz-Zentrum
  fuer Informatik.

\bibitem{lamport_nets76}
L.~Lamport.
\newblock The implementation of reliable distributed multiprocess systems.
\newblock {\em Computer Networks (1976)}, 2(2):95 -- 114, 1978.

\bibitem{lamport_tcs98}
L.~Lamport.
\newblock The part-time parliament.
\newblock {\em ACM Trans. Comput. Syst.}, 16(2):133--169, May 1998.

\bibitem{lamport_sigact01}
L.~Lamport.
\newblock Paxos made simple.
\newblock {\em ACM SIGACT News (Distributed Computing Column)}, 2001.

\bibitem{lamport_msr05}
L.~Lamport.
\newblock Fast paxos.
\newblock Technical Report MSR-TR-2005-112, Microsoft Research, 2005.

\bibitem{lampson_wdag96}
B.~W. Lampson.
\newblock How to build a highly available system using consensus.
\newblock In {\em Proceedings of the 10th International Workshop on Distributed
  Algorithms}, WDAG '96, pages 1--17, London, UK, UK, 1996. Springer-Verlag.

\bibitem{lu_sosp15}
H.~Lu, K.~Veeraraghavan, P.~Ajoux, J.~Hunt, Y.~J. Song, W.~Tobagus, S.~Kumar,
  and W.~Lloyd.
\newblock Existential consistency: Measuring and understanding consistency at
  facebook.
\newblock In {\em Proceedings of the 25th Symposium on Operating Systems
  Principles}, SOSP '15, pages 295--310, New York, NY, USA, 2015. ACM.

\bibitem{ongaro_atc14}
D.~Ongaro and J.~Ousterhout.
\newblock In search of an understandable consensus algorithm.
\newblock In {\em Proc. USENIX Annual Technical Conference}, pages 305--320,
  2014.

\bibitem{prisco_wdag97}
R.~D. Prisco, B.~W. Lampson, and N.~A. Lynch.
\newblock Revisiting the paxos algorithm.
\newblock In {\em Proceedings of the 11th International Workshop on Distributed
  Algorithms}, WDAG '97, pages 111--125, London, UK, UK, 1997. Springer-Verlag.

\bibitem{schneider_cs90}
F.~B. Schneider.
\newblock Implementing fault-tolerant services using the state machine
  approach: A tutorial.
\newblock {\em ACM Comput. Surv.}, 22(4):299--319, Dec. 1990.

\bibitem{shapiro_report11}
M.~Shapiro, N.~Pregui{\c c}a, C.~B. Moreno, and M.~Zawirsky.
\newblock Conflict-free replicated data types.
\newblock page 386{\textendash}400, July 2011.

\bibitem{van_cs15}
R.~Van~Renesse and D.~Altinbuken.
\newblock Paxos made moderately complex.
\newblock {\em ACM Comput. Surv.}, 47(3):42:1--42:36, Feb. 2015.

\end{thebibliography}

\clearpage
\appendixpage
\appendix
\section{Proofs of safety}
\label{sec:correctness_proofs}

In this appendix, we provide proofs for the safety properties (non-triviality, agreement) of our proposed algorithms for solving consensus.

\subsection{Four correctness rules}
\label{subsec:proofs_rules}

Figure~\ref{fig:rules} proposed four rules which we claim are sufficient to satisfy the non-triviality and agreement requirements of distributed consensus (\S\ref{sec:problem}).
We now consider each requirement in turn.
We will use $s[r]=v$ to denote that the value $v$ is in register $r$ on server $s$.

\begin{theorem}[Satisfying non-triviality]
If a value $v$ is the output of a client $c$ then $v$ was the input of some client $c'$.
\end{theorem}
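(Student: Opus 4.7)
The plan is to chase the value $v$ back through the history of reads and writes until we hit the input of some client. The key tools are Rule 1 (a client only outputs $v$ after reading $v$ from a quorum of servers) and Rule 2 (a client only writes $v$ if $v$ is either its own input or was read from some register). These two rules together give us a ``provenance chain'' for $v$ that we just need to show terminates at an input.

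First, I would unfold the hypothesis using Rule 1: since $c$ output $v$, it must have read $v$ from some register $s[r]$ on at least one server $s$. Any read from $s[r]$ can only return $v$ if some client previously performed a write of $v$ into $s[r]$. So there exists a client $c_1$ whose write placed $v$ into $s[r]$. Applying Rule 2 to $c_1$, either $v$ was $c_1$'s input (and we are done, taking $c' = c_1$), or $c_1$ had previously read $v$ from some register, which in turn was written by some earlier client $c_2$, and so on.

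The next step is to make this informal backward chase rigorous by induction on the time at which the write occurred. Consider the set $W$ of write events whose written value equals $v$; each such event has a real time at which the client executed the write. Define a function $\phi$ on $W$ sending a write event to the write event that produced the value the writing client read (if any). By Rule 2, if the writing client did not use its own input, then such a prior read exists, and the corresponding source write must strictly precede it in time (you can only read a register after it has been written). Because times of write events in any single execution form a well-founded set (there are only finitely many events before any given time), iterating $\phi$ must terminate. The only way it terminates is at a write whose client used its own input, yielding the desired $c'$.

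The main obstacle is really a bookkeeping one rather than a conceptual one: I need to argue carefully that every read returning $v$ is \emph{caused by} a prior write of $v$, which in turn relies on registers being write-once and persistent (so that the value in a register, once set, uniquely identifies the write event responsible). Once this is pinned down, the well-founded induction on event time forces the backward chain to bottom out in a client-input write, which is exactly the content of non-triviality. Note that Rules 3 and 4 play no role here; they are needed only for agreement.
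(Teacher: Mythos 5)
Your proof is correct, and it rests on exactly the same two ingredients as the paper's: Rule 1 to obtain a register that actually contains $v$, and Rule 2 to induct over write events. The only difference is direction. The paper runs the induction forwards, establishing the global invariant that every non-nil register contains some client's input value (trivially true initially, preserved by every write under Rule 2), and then applies Rule 1 once at the end. You instead chase the single value $v$ backwards through a provenance chain and justify termination by well-founded descent on write times. The two formulations are equivalent in substance, but the forward invariant is the cheaper one to write down: it absorbs all of the bookkeeping you correctly identify as the ``main obstacle'' (that every read of $v$ is caused by a prior write of $v$, that registers are write-once, that the chain bottoms out) into a single induction over the globally ordered sequence of writes, with no need to define the map $\phi$ or argue about its iteration. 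Your closing observation that Rules 3 and 4 play no role here matches the paper.
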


\begin{proof}\ \\
\indent Assume $v$ was the output of client $c$.

According to Rule 1, $\exists r \in \mathbb{N}_{0}, \exists Q \in \mathcal{Q}_r,\forall s \in Q: s[r]=v$ therefore at least one register contains $v$.

Consider the invariant that all (non-nil) registers contain client input values.
Initially, all registers are unwritten thus this invariant holds.
According to Rule 2, each client will only write either their input value or a value copied from another register, thus the invariant will be preserved.
\end{proof}

\begin{theorem}[Satisfying agreement]
If two clients, $c$ and $c'$, output values, $v$ and $v'$ (respectively), then $v=v'$.
\end{theorem}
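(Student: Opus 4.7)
The plan is to prove agreement by a case analysis on the register sets in which the two outputs originate, using Rule 3 to rule out conflicts inside a single register set and Rule 4 to rule out conflicts across register sets.

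First, apply Rule 1 to each of the two clients. This gives register sets $r, r'\in\mathbb{N}_0$ and quorums $Q\in\mathcal{Q}_r$, $Q'\in\mathcal{Q}_{r'}$ such that $s[r]=v$ for every $s\in Q$ and $s[r']=v'$ for every $s\in Q'$; in the paper's terminology, $v$ is decided in register set $r$ and $v'$ is decided in register set $r'$. Without loss of generality assume $r\leq r'$.

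Second, I would establish the intermediate claim that within any single register set at most one value can be decided. Suppose, for contradiction, that distinct values $w_1\neq w_2$ were both decided in some register set $r^*$ by quorums $Q_1,Q_2$ respectively. Since each $s\in Q_1\cup Q_2$ was written by some client, consider the last-in-execution-order such write; call it $w$, writing value $w_i$ (one of $w_1,w_2$) to register $r^*$ on some server $s_i\in Q_i$. At the moment $w$ occurred, the other quorum $Q_{3-i}$ had already been completed, so value $w_{3-i}\neq w_i$ was decided in register set $r^*$. But then $w$ made $w_i$ decided by a quorum $Q_i\ni s_i$ while another value $w_{3-i}\neq w_i$ was already (and so still could be) decided, violating Rule 3 at $w$. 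This contradicts the assumption that every write respects Rule 3, proving the claim.

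Third, handle the two cases. If $r=r'$ the intermediate claim immediately gives $v=v'$. If $r<r'$, pick any $\sigma\in Q'$; the register $(\sigma,r')$ was written with $v'$ by some client. Rule 4 required, at that write, that no value $w\neq v'$ could be decided by any quorum in register sets $0,1,\dots,r'-1$. But $v$ is already decided in register set $r$ with $r\leq r'-1$, so we must have $v=v'$, as required.

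The main obstacle is the intermediate claim, because Rule 3 is phrased in a forward-looking way (``no $v'$ can also be decided'') and so a careful ordering argument is needed; pinning the contradiction on the write that completes the second quorum, rather than on the writes that completed the first, is the cleanest way to localise the violation. Everything else is straightforward bookkeeping once that lemma is in hand.
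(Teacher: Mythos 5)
Your proposal is correct and follows essentially the same route as the paper's proof: apply Rule~1 to obtain the two deciding quorums, then split on whether the two register sets coincide, using Rule~3 for the equal case and Rule~4 for the unequal case. Your intermediate claim is just a more carefully argued version of the paper's Rule~3 step --- the explicit ordering argument pinning the violation on the last write that completes a quorum is a welcome tightening of what the paper states only informally, but it is not a different approach.
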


\begin{proof}\ \\
\indent Assume that value $v$ was the output of client $c$.
Assume that value $v'$ was the output of client $c'$.

According to Rule 1, the following must be true:
\begin{align*}
\exists r \in \mathbb{N}_{0}, \exists Q \in \mathcal{Q}_r,\forall s \in Q: s[r]=v\\
\exists r' \in \mathbb{N}_{0}, \exists Q' \in \mathcal{Q}_r',\forall s' \in Q': s'[r']=v'
\end{align*}

Since register sets are totally ordered, it must be the case that either $r=r'$, $r < r'$ or $r > r'$:
\begin{description}
  \item[Case] $r=r'$:\\ Both decisions are in the same register set.
  It is either the case that both clients have read from the same quorum or they have read from different quorums.
  \begin{description}
    \item[Case] $Q=Q'$:\\ Each quorum can decide at most one value thus $v=v'$
    \item[Case] $Q\neq Q'$:\\ According to Rule 3, since $Q$ has decided $v$, each client who wrote a register in $Q$ must have ensured that no other quorum in register set $r$ can reach a different decisions.
    Thus $v=v'$.
  \end{description}
  \item[Case] $r < r'$:\\ According to Rule 4, a client will only write $v$ to register set $r'$ after ensuring no quorum in register set $r$ will reach a different decision. Thus $v=v'$.
  \item[Case] $r > r'$:\\ This is the same as $r < r'$ with $r$ and $r'$ swapped. Thus $v=v'$.
\end{description}
\end{proof}

\subsection{Client decision table rules}
\label{subsec:proofs_tables}

We have shown that the four rules for correctness are sufficient to satisfy the non-triviality and agreement requirements of consensus.
We will now show that the client decision table rules (Figure~\ref{fig:decisiontable}) implement the four rules for correctness (Figure~\ref{fig:rules}) and thus satisfies the non-triviality and agreement requirements of consensus.

\begin{theorem}[Satisfying Rule 1]
If the value $v$ is the output of client $c$ then $c$ has read $v$ from a quorum $Q \in Q_r$ in register set $r$.
\end{theorem}

\begin{proof}\ \\
\indent Assume the value $v$ is the output of client $c$.
There must exist a register set $r$ and quorum $Q \in \mathcal{Q}_r$ in the decision table of $c$ with the status \textsc{Decided}~$v$ (Figure~\ref{fig:decisiontable}).
A quorum $Q$ can only reach decision state \textsc{Decided}~$v$ if $\forall s \in Q: s[r]=v$.
\end{proof}

\begin{theorem}[Satisfying Rule 2]
If the value $v$ is written by a client $c$ then either $v$ is $c$'s input value or $v$ has been read from some register.
\end{theorem}

\begin{proof}\ \\
\indent Assume the value $v$ has been written by client $c$.
According to Figure~\ref{fig:decisiontable}, $v$ must be either the input value of $c$ or read from some register.
\end{proof}

\begin{theorem}[Satisfying Rule 3]
If the values $v$ and $v'$ are decided in register set $r$ then $v=v'$.
\end{theorem}

\begin{proof}\ \\
\indent Assume the value $v$ is decided in register set $r$ by quorum $Q \in \mathcal{Q}_r$, thus $\forall s \in Q: s[r]=v$.
Assume the value $v'$ is decided in register set $r$ by quorum $Q' \in \mathcal{Q}_r$,  thus $\forall s \in Q': s[r]=v'$.

The register set $r$ either uses intersecting quorums or client restricted configuration.
\begin{description}
  \item[Case] $r$ is client restricted:\\
  Each client is assigned a disjoint subset of register sets thus at most one client is assigned $r$.
  A client will only write a (non-nil) value to $r$ if they have been assigned it and not yet written to it (Figure~\ref{fig:decisiontable}).
  The register set $r$ will therefore only contain one (non-nil) value thus $v=v'$.
  \item[Case] $r$ has intersecting quorums:\\
  This requires that there exists a server $s$ such that $s \in Q$ and $s \in Q'$.
  We require that both $s[r]=v$ and $s[r]=v'$, thus $v=v'$.
\end{description}
\end{proof}

\begin{theorem}[Satisfying Rule 4]
If the value $v$ is decided in register set $r$ and the (non-nil) value $v'$ is written to register set $r'$ where $r<r'$ then $v=v'$
\end{theorem}

We will prove this by induction over the writes to register sets $>r$.

\begin{theorem}[Satisfying Rule 4 - Base case]
If the value $v$ is decided in register set $r$ then the first (non-nil) value to be written to a register set $r'$ where $r<r'$ is $v$.
\end{theorem}

\begin{proof}\ \\
\indent Assume the value $v$ is decided in register set $r$ by quorum $Q \in \mathcal{Q}_r$ thus $\forall s \in Q: s[r]=v$.
Since registers are write once, the following always holds true: $\forall s \in Q: s[r]=v \lor s[r]=unwritten$.

Assume the value $v'$ is written to register set $r'$ by client $c$ where $r<r'$.
Assume that $v'$ is the first value to be written thus $c$ cannot read any (non-nil) values from registers $>r$ before writing $v'$.

We will show that $v=v'$.

Consider the decision table of client $c$ when it is writing to $r'$.
Since $r<r'$, the decision state of $Q$ must be either \textsc{None}, \textsc{Maybe}~$v'$ or \textsc{Decided}~$v'$ (Figure~\ref{fig:decisiontable}).
\begin{description}
  \item[Case] \textsc{Decided} $v'$:\\
  This decision state requires that $\forall s \in Q: s[r]=v'$.
  Since we know that $\forall s \in Q: s[r]=v \lor unwritten$  then $v=v'$.
  \item[Case] \textsc{Maybe} $v'$:\\
  The decision state \textsc{Maybe} $v'$ can be reached in one of three ways:
  \begin{description}
    \item[Case] $c$ read $v'$ from register $r$ of some server $s \in Q$:
    Since we know that $\forall s \in Q: s[r]=v \lor unwritten$ then $v=v'$.
    \item[Case] $r$ is client restricted and $c$ read $v'$ from register $r$ of some server $s$
    Since we know that $\forall s \in Q: s[r]=v \lor unwritten$  then $v=v'$.
    \item[Case] $c$ read $v'$ from a register $>r$:
    Since $v'$ is the first value to be written to a register $>r$, this case cannot occur.
  \end{description}
  \item[Case] \textsc{None}:\\
  The decision state \textsc{None} can be reached in one of five ways:
  \begin{description}
    \item[Case] $c$ read \emph{nil} from register $r$ of some server $s \in Q$:\\
    Since $\forall s \in Q: s[r]=v \lor unwritten$, this case cannot occur.
    \item[Case] $c$ read two different values from two servers, $s,s' \in Q$:\\
    Since $\forall s \in Q: s[r]=v \lor unwritten$, this case cannot occur.
    \item[Case] $c$ read two different values from registers $>r$:\\
    Since $v'$ is the first value to be written to a register $>r$, this case cannot occur.
    \item[Case] $c$ read a value from register $r$ of some server $s \in Q$ and a different value from a register $>r$:\\
    Since $v'$ is the first value to be written to a register $>r$, this case cannot occur.
    \item[Case] $r$ is client restricted, $c$ read a value from a register in $r$ and a different value from a register $>r$:\\
    Since $v'$ is the first value to be written to a register $>r$, this case cannot occur.
  \end{description}
\end{description}
\end{proof}

Since the following proof overlaps significantly with the previous proof, we have underlined the parts which have been altered.

\begin{theorem}[Satisfying Rule 4 - Inductive case]
If the value $v$ is decided in register set $r$ and all (non-nil) values written to registers $>r$ are $v$ then the next (non-nil) value to be written to a register set $r'$ where $r<r'$ is also $v$.
\end{theorem}

\begin{proof}\ \\
\indent
Assume the value $v$ is decided in register set $r$ by quorum $Q \in \mathcal{Q}_r$ thus $\forall s \in Q: s[r]=v$.
Since registers are write once, the following always holds true: $\forall s \in Q: s[r]=v \lor s[r]=unwritten$.

Assume the value $v'$ is written to register set $r'$ by client $c$ where $r<r'$.
\uline{Assume that all (non-nil) values written to registers $>r$ are $v$ thus $c$ can only read $v$ from (non-nil) registers $>r$.}

We will show that $v=v'$.

Consider the decision table of client $c$ when it is writing to $r'$.
Since $r<r'$, the decision state of $Q$ must be either \textsc{None}, \textsc{Maybe}~$v'$ or \textsc{Decided}~$v'$ (Figure~\ref{fig:decisiontable}).
\begin{description}
  \item[Case] \textsc{Decided} $v'$:\\
  This decision state requires that $\forall s \in Q: s[r]=v'$.
  Since we know that $\forall s \in Q: s[r]=v \lor unwritten$  then $v=v'$.
  \item[Case] \textsc{Maybe} $v'$:\\
  The decision state \textsc{Maybe} $v'$ can be reached in one of three ways:
  \begin{description}
    \item[Case] $c$ read $v'$ from register $r$ of some server $s \in Q$:\\
    Since we know that $\forall s \in Q: s[r]=v \lor unwritten$ then $v=v'$.
    \item[Case] $r$ is client restricted and $c$ read $v'$ from register $r$ of some server $s$:\\
    Since we know that $\forall s \in Q: s[r]=v \lor unwritten$  then $v=v'$.
    \item[Case] $c$ read $v'$ from a register $>r$:\\
    \uline{Since $v$ is the only (non-nil) value to be written to registers $>r$ then $v=v'$.}
  \end{description}
  \item[Case] \textsc{None}:\\
  The decision state \textsc{None} can be reached in one of five ways:
  \begin{description}
    \item[Case] $c$ read \emph{nil} from register $r$ of some server $s \in Q$:\\
    Since $\forall s \in Q: s[r]=v \lor unwritten$, this case cannot occur.
    \item[Case] $c$ read two different values from two servers, $s,s' \in Q$:\\
    Since $\forall s \in Q: s[r]=v \lor unwritten$, this case cannot occur.
    \item[Case] $c$ read two different values from registers $>r$:\\
    \uline{Since $v$ is the only (non-nil) value to be written to registers $>r$, this case cannot occur.}
    \item[Case] $c$ read a value from register $r$ of some server $s \in Q$ and a different value from a register $>r$:\\
    \uline{Since we know that $\forall s \in Q: s[r]=v \lor unwritten$ and $v$ is the only (non-nil) value to be written to registers $>r$, this case cannot occur.}
    \item[Case] $r$ is client restricted, $c$ read a value from a register in $r$ and a different value from a register $>r$:\\
    \uline{Since we know that at some time $\forall s \in Q: s[r]=v$, then if $r$ is client restricted then all non-nil registers in $r$ must contain $v$.
    Since $v$ is the only (non-nil) value to be written to registers $>r$, this case cannot occur. }
  \end{description}
\end{description}
\end{proof}

\subsection{(Fast) Paxos}
\label{subsec:proofs_paxos}

Figure~\ref{fig:paxos_alg} describes the Paxos algorithm using write-once registers.
Section~\ref{sec:fast_paxos} describe how to generalise Figure~\ref{fig:paxos_alg} to Fast Paxos.
In this section, we proof that Fast Paxos (and therefore Paxos) implements the four rules for correctness (Figure~\ref{fig:rules}) and thus satisfies the non-triviality and agreement requirements of consensus.

\begin{theorem}[Satisfying Rule 1]
If the value $v$ is the output of client $c$ then $c$ has read $v$ from a quorum $Q \in Q_r$ in register set $r$.
\end{theorem}

\begin{proof}\ \\
\indent Assume the value $v$ is the output of client $c$.

This must be the result of $c$ completing phase two of Fast Paxos for some register set $r$.
$c$ must have received the message \textsc{P2b}($r$,$v$) from $>\frac{1}{2}$/$\geq\frac{3}{4}$ of servers (depending on either $r$ is \emph{classic}/\emph{fast}).
Prior to sending \textsc{P2b}($r$,$v$), each server $s$ has written register $r$ to $v$.
$Q_r$ is any subset of servers containing $>\frac{1}{2}$/$\geq\frac{3}{4}$ of servers (depending on either $r$ is \emph{classic}/\emph{fast}).
Thus $c$ has read a quorum $Q \in Q_r$ in register set $r$.
\end{proof}

\begin{theorem}[Satisfying Rule 2]
If the value $v$ is written by a client $c$ then either $v$ is $c$'s input value or $v$ has been read from some register.
\end{theorem}

\begin{proof}\ \\
\indent Assume a value $v$ is written by a client $c$.
This must be the result of completing phase one of Fast Paxos for some register set $r$ and choosing the value $v$.
The value $v$ must have been chosen in one of following ways:
\begin{description}
  \item[Case] $0$ (non-nil) registers where returned with \textsc{P1b} messages:\\
  In this case, $v$ is $c$'s input value.
  \item[Case] $1$ or more (non-nil) registers where returned with \textsc{P1b} messages:\\
  In this case, $v$ is the most common value from the greatest register set thus $v$ has been read from some register.
\end{description}
\end{proof}

\begin{theorem}[Satisfying Rule 3]
If the values $v$ and $v'$ are decided in register set $r$ then $v=v'$.
\end{theorem}

\begin{proof}\ \\
\indent Assume the values $v$ and $v'$ are decided in register set $r$.
It is therefore the case that there exists two quorums $Q,Q' \in Q_r$ such that $\forall s \in Q: s[r]=v$ and $\forall s \in Q': s[r]=v'$
The register set $r$ is either \emph{fast} (quorum intersecting) or \emph{classic} (client restricted):
\begin{description}
  \item[Case] $r$ is \emph{fast}:\\
  There exists a server $s$ where $s \in Q$ and $s \in Q'$.
  We require that $s[r]=v \land s[r]=v'$ thus $v=v'$
  \item[Case] $r$ is \emph{classic}:\\
  At most one client is assigned register set $r$.
  Each client only writes (non-nil) values to assigned register sets and each does so with only one value.
  Therefore $v=v'$.
\end{description}
\end{proof}

\begin{theorem}[Satisfying Rule 4]
If the value $v$ is decided in register set $r$ and the (non-nil) value $v'$ is written to register set $r'$ where $r<r'$ then $v=v'$
\end{theorem}

We will prove this by induction over the writes to register sets $>r$.

\begin{theorem}[Satisfying Rule 4 - Base case]
If the value $v$ is decided in register set $r$ then the first (non-nil) value to be written to a register set $r'$ where $r<r'$ is $v$.
\end{theorem}

\begin{proof}\ \\
\indent Assume the value $v$ is decided in register set $r$.
If $r$ is \emph{fast} (quorum intersecting), $v$ must have been written to register $r$ on $\frac{3}{4}$ or more of servers.
Otherwise, if $r$ is \emph{classic} (client restricted), $v$ must have been written to register $r$ on least $\frac{1}{2}$ of servers.
The writing of $v$ to $r$ must be the result of receiving \textsc{P2a}($r$,$v$).

Assume the (non-nil) value $v'$ is written to register set $r'$ by client $c$.
This must be the result of completing phase one of Fast Paxos for register set $r'$ and choosing the value $v'$.
The value $v'$ could be chosen in one of two ways:
\begin{description}
  \item[Case] $v'$ is $c$'s input value:
  This requires that (non-nil) registers where returned to $c$ with the \textsc{P1b} messages for $r$.
  At last one server $s$ must both write $s[r]=v$ and send a \textsc{P1b} message to $c$ since both require at least $\frac{1}{2}$ of servers.
  \begin{description}
    \item[Case] $s$ sends \textsc{P1b} for register $r'$ first:\\
    Prior to sending \textsc{P1b}, the server $s$ must write \emph{nil} to all unwritten registers $0$ to $r'-1$, including register $r$ since $r<r'$.
    Server $s$ will not be able to later write $s[r]=v$ so this case cannot occur.
    \item[Case] $s$ must write $s[r]=v$ first:\\
    Since no registers where returned with \textsc{P1b} messages, this case cannot occur.
  \end{description}
  \item[Case] $v'$ is the most common value read from the greatest (non-nil) register set:
  This requires that $1$ or more (non-nil) registers where returned to $c$ with the \textsc{P1b} messages for $r$.
  As we have already seen, at least one \textsc{P1b} message for register $r'$ must include $r:v$.
  The chosen value $v'$ must have either been read from register set $r$ or from any register set $>r$.
  \begin{description}
      \item[Case] $v'$ was read from register set $r$:\\
      The register set $r$ is either \emph{fast} (quorum intersecting) or \emph{classic} (client restricted):
      \begin{description}
          \item[Case] $r$ is \emph{classic}:\\
          All (non-nil) registers from $r$ returned with \textsc{P1b} messages will contain $v$ thus $v=v'$.
          \item[Case] $r$ is \emph{fast}:\\
          At least $\frac{1}{4}$ of servers will reply with $r:v$.
          Therefore $v$ will be the most common value and it will be chosen by $c$ thus $v=v'$.
      \end{description}
      \item[Case] $v'$ was read from a register set $>r$:\\
      Since the client $c$ is the first to write to a register $>r$ then $c$ will not read any registers $>r$. Therefore this case cannot occur.
      \end{description}
\end{description}
\end{proof}

Since the following proof overlaps significantly with the previous proof, we have underlined the parts which have been altered.

\begin{theorem}[Satisfying Rule 4 - Inductive case]
If the value $v$ is decided in register set $r$ and all (non-nil) values written to registers $>r$ are $v$ then the next (non-nil) value to be written to a register set $r'$ where $r<r'$ is also $v$.
\end{theorem}

\begin{proof}\ \\
\indent Assume the value $v$ is decided in register set $r$.
If $r$ is \emph{fast} (quorum intersecting), $v$ must have been written to register $r$ on $\frac{3}{4}$ or more of servers.
Otherwise, if $r$ is \emph{classic} (client restricted), $v$ must have been written to register $r$ on least $\frac{1}{2}$ of servers.
The writing of $v$ to $r$ must be the result of receiving \textsc{P2a}($r$,$v$).

Assume the (non-nil) value $v'$ is written to register set $r'$ by client $c$.
This must be the result of completing phase one of Fast Paxos for register set $r'$ and choosing the value $v'$.
The value $v'$ could be chosen in one of two ways:
\begin{description}
  \item[Case] $v'$ is $c$'s input value:
  This requires that (non-nil) registers where returned to $c$ with the \textsc{P1b} messages for $r$.
  At last one server $s$ must both write $s[r]=v$ and send a \textsc{P1b} message to $c$ since both require at least $\frac{1}{2}$ of servers.
  \begin{description}
    \item[Case] $s$ sends \textsc{P1b} for register $r'$ first:\\
    Prior to sending \textsc{P1b}, the server $s$ must write \emph{nil} to all unwritten registers $0$ to $r'-1$, including register $r$ since $r<r'$.
    Server $s$ will not be able to later write $s[r]=v$ so this case cannot occur.
    \item[Case] $s$ must write $s[r]=v$ first:\\
    Since no registers where returned with \textsc{P1b} messages, this case cannot occur.
  \end{description}
  \item[Case] $v'$ is the most common value read from the greatest (non-nil) register set:
  This requires that $1$ or more (non-nil) registers where returned to $c$ with the \textsc{P1b} messages for $r$.
  As we have already seen, at least one \textsc{P1b} message for register $r'$ must include $r:v$.
  The chosen value $v'$ must have either been read from register set $r$ or from any register set $>r$.
  \begin{description}
      \item[Case] $v'$ was read from register set $r$:\\
      The register set $r$ is either \emph{fast} (quorum intersecting) or \emph{classic} (client restricted):
      \begin{description}
          \item[Case] $r$ is \emph{classic}:\\
          All (non-nil) registers from $r$ returned with \textsc{P1b} messages will contain $v$ thus $v=v'$.
          \item[Case] $r$ is \emph{fast}:\\
          At least $\frac{1}{4}$ of servers will reply with $r:v$.
          Therefore $v$ will be the most common value and it will be chosen by $c$ thus $v=v'$.
      \end{description}
      \item[Case] $v'$ was read from a register set $>r$:\\
      \uline{Since all non-nil registers $>r$ contain $v$ then $c$ will not read any other value from any registers $>r$ thus $v=v'$.}
  \end{description}
\end{description}
\end{proof}

\end{document}